\newcounter{network}
\newcommand{\uu}[1]{\ensuremath{\underline{#1}}}
\newcommand{\brac}[1]{\ensuremath{\left( #1 \right)}}
\newcommand{\rbrac}[1]{\ensuremath{\left[ #1 \right]}}
\def\R{\ensuremath{I\!\!R}}
\def\Rnn{\ensuremath{I\!\!R_{\geq 0}}}
\def\Rp{\ensuremath{\R_{> 0}}}
\def\cone{\ensuremath{\ker(Y\, I_a) \cap I\!\!R_{\geq 0}^r}}
\def\LAM{\ensuremath{\Lambda\brac{E}}}
\newcommand{\vel}[2]{\ensuremath{v( #1,\, #2 )}}
\DeclareMathOperator{\im}{im}
\DeclareMathOperator{\supp}{supp}
\DeclareMathOperator{\sign}{sign}
\DeclareMathOperator{\diag}{diag}
\DeclareMathOperator{\rank}{rank}
\DeclareMathOperator{\reac}{reac}
\newcommand{\sig}[1]{\ensuremath{\sign\brac{#1}}}
\newtheorem{defi}{Definition}
\newtheorem{lem}{Lemma}
\newtheorem{coro}{Corollary}
\newtheorem{remark}{Remark}
\newtheorem{ass}[defi]{Assumption}
\title{Multiple positive steady states in subnetworks defined by
  stoichiometric generators}
\author{Carsten Conradi\thanks{E-mail:  conradi@mpi-magdeburg.mpg.de} \\[0.2cm]
  Max-Planck-Institute Dynamics of Complex Technical
    Systems\\
    Sandtorstr. 1, 39106 Magdeburg, Germany
}
\begin{document}

\maketitle

\begin{abstract}
  In Systems Biology there is a growing interest in the question,
  whether  or not a given mathematical model can admit more than one
  steady state. As parameter values (like rate constants and total
  concentrations) are often unknown or subject to a very high
  uncertainty due to measurement errors and and difficult experimental
  conditions, one is often interested in the question, whether or not
  a given mathematical model can, for some conceivable parameter
  vector, exhibit multistationarity at all. A partial answer to this
  question is given in Feinberg's deficiency one algorithm. This
  algorithm can decide about the existence of multistationarity by
  analyzing a, potentially large, set of systems of linear
  inequalities that are independent of parameter values. 
  However, the deficiency one algorithm is limited to what its author
  calls \textbf{regular deficiency one networks}. Many
  realistic networks have a deficiency higher than one, thus the
  algorithm cannot be applied directly. In a previous publication it
  was suggested to analyze certain well defined subnetworks that are
  guaranteed to be of deficiency one. If these subnetworks are
  regular, then one can use the deficiency one algorithm to establish
  multistationarity. Realistic reaction networks, however, often lead
  to subnetworks that are irregular, especially if metabolic networks
  are considered. Here the special structure of the subnetworks
  is used to derive conditions for multistationarity. These conditions
  are independent of the regularity conditions required by the
  deficiency one algorithm. Thus, in particular, these conditions are
  applicable to irregular subnetworks.
\end{abstract}

\section{Introduction}

In Systems Biology there is a growing interest in the question, whether
or not a given mathematical model can admit more than one steady
state. In cell cycle regulation, for example, one can identify
different phases of the cell cycle (G1, S, G2 and M-phase) as
different stable steady states. The cycle itself can then considered as
a switching between these steady states. As parameter values (like
rate constants and total concentrations) are often unknown or subject
to a very high uncertainty due to measurement errors and and
difficult experimental conditions, one is often interested in the
question, whether or not a given mathematical model can, for some
conceivable parameter vector, exhibit multistationarity at all.

A partial answer to this question is given in Feinberg's chemical
reaction network theory, that links the ability of a mathematical
model to exhibit multistationarity to the structure of the underlying
biochemical reaction network
\cite{fein-006,fein-007,fein-016,fein-017}. The deficiency one
algorithm, in particular, can decide about the existence of
multistationarity by analyzing a, potentially large, set of systems of
linear inequalities that depend on the network structure alone, that
is, that are 
independent of parameter values. If any of these inequality
systems is feasible, then multistationarity is guaranteed and one can
compute steady states and rate constants from its solution set. If
all are infeasible, then multistationarity is impossible, for any
conceivable parameter vector (see, for example,
\cite{fein-006,fein-017}). Observe that, in particular, this algorithm
can also be used to prove that multistationarity is impossible.

However, the deficiency one algorithm is limited to what its author
calls \textbf{regular deficiency one networks} \cite{fein-017}. Many
realistic networks have a deficiency higher than one, thus the
algorithm cannot be applied directly. In \cite{fein-025,cc-flo-003} we
therefore suggested a way to circumvent this: instead of analyzing the
complete network we propose to analyze certain well defined
subnetworks that are guaranteed to be of deficiency one. If these
subnetworks are regular, then one can use the deficiency one algorithm
to establish multistationarity. If this is successful, then
\cite{fein-025} gives sufficient conditions that are computationally
simple to check to extend multistationarity from the subnetwork to
the overall network.

Realistic reaction networks, however, often lead to subnetworks that
are irregular, especially if metabolic networks are considered (see
e.g.\ \cite{subnet-003} for an analysis of the upper part of
glycolisis). Consequently, the deficiency one algorithm cannot be
applied to these subnetworks. If this irregularity is of a special
kind (termed $\emptyset$-irregularity in \cite{subnet-003}), then one can 
regularize the subnetwork and apply the deficiency one algorithm to
the resulting regularized subnetwork. It is then possible to extend
multistationarity -- so it exists -- to the overall network using the
aforementioned results of \cite{fein-025}.

Here we follow a different approach: instead of trying to regularize a
subnetwork, we use the special structure of the subnetworks defined in
\cite{fein-025} to derive conditions for multistationarity. These
conditions are independent of the regularity conditions required by
the deficiency one algorithm. Thus, in particular,
these conditions are applicable to irregular subnetworks. Of course it
is still possible to use the results of \cite{fein-025} to extend
multistationarity (once it can be established in the subnetwork).

\section{Notation}
\label{sec:Notation}

Consider the following (bio)chemical reaction network with $n=2$
species $A$ and $B$ and with $m=5$ complexes $A$, $0$, $B$, $A+B$
and $2\, A$ and $r=6$ reactions: 
\begin{displaymath}
  \xymatrix{
    A \ar @<.4ex> @{-^>} ^{\bf k_1}[r] & 0 \ar @{-^>} ^{\bf k_2}[l] 
    \ar @{-^>} @<.4ex> ^{\bf k_3}[r] & B \ar @{-^>} ^{\bf k_4}[l] \\
    A + B \ar @{-^>} @<.4ex> ^{\bf k_5}[r] & 2\, A \ar @{-^>} ^{\bf k_6}[l]
  }
\end{displaymath}
Let $x\in\R^n$ be the vector of species concentrations (e.g.\ let
$x_1$ be the concentration of $A$ and $x_2$ be the concentration of
$B$). By associating each concentration with the corresponding unit
vector $e_i$ of Euclidean space ($A$ with $e_1$ and $B$ with $e_2$ in
case of the example) one can define $m$ \lq complex\rq-vectors $y_i$
(in case of the example $y_1 = e_1$ for $A$, $y_2=0$, the
2-dimensional zero vector for the complex $0$, $y_3=e_2$ for $B$,
$y_4=e_1+e_2$ for $A+B$ and $y_5 = 2\, e_1$ for $2\, A$). Collect
these in a matrix $Y\in\R^{n\times m}$. For the example one obtains
\begin{align*}
  Y &= \left[
    \begin{array}{ccc}
      y_1 & \dots & y_5
    \end{array}
  \right] \\
  &= \left[
    \begin{array}{ccccc}
      1 & 0 & 0 & 1 & 2 \\
      0 & 0 & 1 & 1 & 0
    \end{array}
  \right].
\end{align*}
Let $I_a$ be the incidence matrix of the graph associated to the
reaction network in standard form as defined in
\cite{fein-016,fein-017}, that is a graph, where node labels are
unique. This means that one has $I_a\in\{-1,0-1\}^{m\times r}$. For
the example one obtains 
\begin{displaymath}
  I_a = \left[
    \begin{array}{rrrrrr}
      -1 &  1 &  0 &  0 &  0 &  0 \\
       1 & -1 & -1 &  1 &  0 &  0 \\
       0 &  0 &  1 & -1 &  0 &  0 \\
       0 &  0 &  0 &  0 & -1 &  1 \\
       0 &  0 &  0 &  0 &  1 & -1
    \end{array}
  \right]
\end{displaymath}
Finally let $k\in\Rp^r$ be the vector of rate constants, that is for
the example:
\begin{displaymath}
  k = \left(k_1,\, \ldots,\, k_6\right).
\end{displaymath}

The stoichiometric matrix $N$ is defined as the product
\begin{equation}
  \label{eq:factor_N}
  N := Y\, I_a
\end{equation}
of the matrix of complexes $Y$ and the incidence matrix of the
associated directed graph $I_a$. 
\begin{defi}[Reactant Complex, Educt, $\bar m$]
  \label{def:reactant_complex}
  A complex that has at least one outgoing edge is called a reactant
  complex. We use the symbol $\bar m\leq m$ to denote the number of
  reactant complexes. \\
  Let $y$ be a reactant complex. Then all species with
  indices contained in $\supp\brac{y}$ are called educts. 
\end{defi}
For simplicity we assume --  w.l.o.g. -- the following ordering of
complexes:
\begin{ass}[Complex Ordering]\label{ass:complex_ordering}
  Assume that the complexes are ordered such that the first $\bar m$
  complexes are reactant complexes.
\end{ass}
Under this assumption the mapping $\reac$ that associates every
reaction with its reactant complex has a particular simple form:
\begin{defi}[Mapping $\reac$]
  Let $\reac:\left\{1,\, \ldots,\, r\right\} \to \left\{1,\, \ldots,\,
    \bar m\right\}$ be defined as
  \begin{equation}
    \label{eq:def_reac}
    \reac\brac{j} = i\text{, $y_i$ is the tail of reaction $j$}.
  \end{equation}
\end{defi}
If mass-action kinetics is used, then the reaction rate
$v_i\brac{k,x}$ associated to the $i$-th reaction is given as the
monomial $v_i\brac{k,x} = k_i\, x^{y_{\reac\brac{i}}}$ (i.e.\ the
reaction rate $v_i\brac{k,x}$ is proportional to the product of the
educt concentrations). One obtains the following function
$\vel{k}{x}$:
\begin{defi}[$\vel{k}{x}$, $\Phi\brac{x}$, $\Psi\brac{x}$]\label{def:v_PHI_PSI}
  Using mass action kinetics, the vector of reaction rates is defined
  as
  \begin{subequations}
    \begin{align}
      \label{eq:def_v}
      \vel{k}{x} &:= \diag\brac{k}\, \Phi\brac{x}, \\
      \intertext{where}
      \Phi\brac{x} &:= \brac{x^{y_{\reac\brac{1}}},\, \ldots,\,
        x^{y_{\reac\brac{r}}}}^T
        \ . \\ 
      \intertext{Let $e_i$ denote the unit vectors of Euclidian
        $n$-space and define}
      \Pi &:= \rbrac{
        \begin{array}{c}
          e^T_{\reac\brac{1}} \\
          \vdots \\
          e^T_{\reac\brac{r}}
        \end{array}
      } \\
      \Psi\brac{x} &:= \brac{x^{y_i}}_{i=1,\, \ldots,\, \bar m}. \\
      \intertext{Note that this implies that}
      \Phi\brac{x} &= \Pi\, \Psi\brac{x} \\
      \intertext{and thus}
      \vel{k}{x} &= \diag\brac{k}\, \Phi\brac{x} \; = \;
      \diag\brac{k}\, \Pi\, \Psi\brac{x} \\
      \intertext{hold. Let}
      \hat Y &= \rbrac{y_i}_{i=1,\, \ldots,\, \bar m}
    \end{align}
    be a matrix having the exponents of $\Psi\brac{x}$ as column
    vectors (recall assumption~\ref{ass:complex_ordering} and note
    that this implies that $\hat Y$ contains the first $\bar m$
    columns of $Y$).
  \end{subequations}
\end{defi}
For the example one obtains
\begin{displaymath}
  \vel{k}{x} = \left( k_1\, x_1,\, k_2,\, k_3,\, k_4\, x_2,\, k_5,
    x_1\, x_2,\, k_6\, x_1^2 \right)^T.
\end{displaymath}
Observe that $\hat Y = Y$ and thus
\begin{displaymath}
  \Phi\brac{x} = \Psi\brac{x},
\end{displaymath}
in this case.
Then the following system of Ordinary Differential Equations (ODEs)
describes the dynamics of the species concentrations:
\begin{subequations}
  \begin{equation}
    \label{eq:def_ODEs}
    \dot x = N\, \vel{k}{x},
  \end{equation}
  If the stoichiometric matrix $N\in\R^{n\times r}$ does not have full
  row rank, the system is subject to \lq conservation
  relations\rq: let $s=\rank\brac{N}<n$, then there is a matrix
  $W\in\R^{n\times n-s}$ with $W^T\, N =0$ and
  \begin{equation} 
    \label{eq:def_con_rel}
    W^T\, x(t) = c
  \end{equation}
  along solutions $x(t)$ of (\ref{eq:def_ODEs}), cf.\ \cite{fein-032}.
\end{subequations}

As we are mainly interested in positive steady states, the pointed
polyhedral cone \cone{} is of particular interest. The symbol $E$ is
used to denote the unique (up to scalar multiplication) generators of
\cone. Let $p$ be the number of generators, if $p>1$, then
$E\in\R^{r\times p}$ is a matrix whose columns are the generators of
\cone, if $p=1$, then $E\in\R^r$ is a vector.
Define the set of all nonnegative vectors $x\in\Rnn^p$ such that $E\,
x$ is positive: 
\begin{equation}
  \label{eq:def_LAMBDA}
  \LAM := \left\{\, x\in\Rnn^p\; |\; E\, x >0\, \right\}.
\end{equation}

\section{Some remarks about positive steady states}
\label{sec:remarks-pos-ss}

The structure of (\ref{eq:def_ODEs}) motivates the following result
concerning positive steady states:
\begin{lem}[Existence of positive steady states]
  \label{lemma:existence-pos-ss}
  Consider a system of ODEs as in (\ref{eq:def_ODEs}), with
  stoichiometric matrix $N$ and let $E\in\Rnn^{r\times p}$ be the
  generator matrix of \cone. Let $k\in\Rp^r$ be given. Then \emph{the
    positive vector $a$} is a solution to the polynomial equation $N\,
  \vel{k}{a} =0$, if and only if there exists a vector
  $\lambda\in\LAM$ with
  \begin{equation}
    \label{eq:cond_k_E}
    k= \diag\brac{\Phi\brac{a^{-1}}}\, E\, \lambda.
  \end{equation}
\end{lem}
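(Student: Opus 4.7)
The plan is to rewrite the steady state equation $N\,\vel{k}{a}=0$ using the factorization $N = Y\,I_a$ and the definition $\vel{k}{a} = \diag(k)\,\Phi(a)$, so that it becomes the statement $\diag(k)\,\Phi(a) \in \ker(Y\,I_a)$. From there, both directions reduce to elementary manipulations with the cone \cone{} and its generators $E$.

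For the forward direction, I would first observe that since $k\in\Rp^r$ and $a$ is strictly positive, every component of $\Phi(a)$ is a positive monomial, so $\diag(k)\,\Phi(a)$ is a strictly positive vector in $\R^r$. Combined with $Y\,I_a\,\diag(k)\,\Phi(a)=0$, this puts $\diag(k)\,\Phi(a)$ in $\ker(Y\,I_a)\cap\Rp^r$, i.e.\ in the relative interior of \cone{}. Since the columns of $E$ generate \cone, there exists $\lambda\in\Rnn^p$ with $\diag(k)\,\Phi(a) = E\,\lambda$, and because the left hand side is strictly positive, $E\,\lambda>0$, i.e.\ $\lambda\in\LAM$. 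Componentwise division by $\Phi(a)$ (legal because each $\Phi_i(a)>0$) then gives
\begin{equation*}
  k \;=\; \diag\brac{\Phi(a)}^{-1}\, E\,\lambda \;=\; \diag\brac{\Phi(a^{-1})}\, E\,\lambda,
\end{equation*}
using the fact that $\Phi_i(a^{-1}) = a^{-y_{\reac(i)}} = \Phi_i(a)^{-1}$.

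For the converse, given $\lambda\in\LAM$ with the stated expression for $k$, I would verify by direct substitution that $\diag(k)\,\Phi(a) = E\,\lambda$, whence $N\,\vel{k}{a} = Y\,I_a\,E\,\lambda = 0$ because each column of $E$ lies in $\ker(Y\,I_a)$. Positivity of $k$ follows from $\Phi(a^{-1})>0$ together with $E\,\lambda>0$.

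The only step requiring any thought is the representability of a strictly positive element of \cone{} as $E\,\lambda$ with $\lambda\in\Rnn^p$; this is just the defining property of a generating set for a polyhedral cone, and the extra condition $E\,\lambda>0$ that places $\lambda$ into \LAM{} is immediate from the fact that the element being represented is itself positive. So I do not expect a genuine obstacle; the content of the lemma is really the translation between \lq\lq positive steady state\rq\rq\ and \lq\lq positive element of the flux cone\rq\rq, together with the elementwise inversion that recovers $k$.
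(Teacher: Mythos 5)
Your proof is correct and takes essentially the same route as the paper's own argument: positivity of $\vel{k}{a}=\diag\brac{k}\,\Phi\brac{a}$ reduces the steady state condition to membership in \cone, hence to a representation $E\,\lambda$ with $\lambda\in\LAM$, and componentwise inversion of $\Phi\brac{a}$ yields (\ref{eq:cond_k_E}). The paper states this more tersely (and leaves the converse direction implicit), but the content is identical.
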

\begin{proof}
  Follows from the fact that $a>0$ and $k>0$ implies
  $\vel{k}{a}>0$. Thus $N\, \vel{k}{a}=0$ holds if and only if
  $\vel{k}{a} \in \cone$, that is, if and only if $\vel{k}{a} = E\,
  \lambda$, for some $\lambda\in\LAM$. As $\vel{k}{a}=\diag\brac{k}\,
  \Phi\brac{a}$ (\ref{eq:cond_k_E}) follows immediately.
\end{proof}
%
\begin{remark}
  If a positive steady state exists, then (\ref{eq:cond_k_E}) must
  hold. The condition (\ref{eq:cond_k_E}) can thus be used to
  constrain the set of rate constants that allow the existence (of at
  least one) positive steady state.
\end{remark}
\begin{remark}[Positive steady states]
  Consider a system of ODEs as in (\ref{eq:def_ODEs}) and let
  $E\in\Rnn^{r\times p}$ be the generator matrix of \cone. 
  The system has a positive steady state, iff $\cone\neq\emptyset$
  and the rows of $E$ are nonzero.
\end{remark}
\begin{remark}
  \label{coro:a_free}
  Consider a system of ODEs as in (\ref{eq:def_ODEs}) and let
  $E\in\Rnn^{r\times p}$ be the generator matrix of \cone{} and
  suppose that $E$ does not contain any zero rows. Then \emph{every
    positive vector} $a$ can be a steady state of (\ref{eq:def_ODEs}),
  by choosing $k$ as in (\ref{eq:cond_k_E}), where $\lambda\in\LAM$ is
  free and takes the role of the rate constants.
\end{remark}

\section{Subnetworks defined by stoichiometric generators}
\label{sec:SubNets}

In this section the following concepts from graph theory will be used
(two of them are standard definitions in graph theory, that stated
here merely for convenience, while the third, very common in CRNT, is
derived from those two):
\begin{defi}
  \begin{enumerate}
  \item[\cite{wiki:connected_component}] Connected component: the maximal connected subgraphs of a graph
  \item[\cite{wiki:connected_component}] Strongly connected component: a directed graph is called
    strongly connected if there is a path from each vertex in the
    graph to every other vertex. The strongly connected components
    (SCC) of a directed graph are its maximal strongly connected
    subgraphs.
  \item[\cite{fein-017}] Terminal strongly connected component: an scc that has no
    outgoing edge
  \end{enumerate}
\end{defi}
Next we recall some results concerning subnetworks
defined by stoichiometric generators
\begin{lem}[Properties of subnetworks defined by stoichiometric
  generators]
  \label{lemma:prop_subnet}
  For a subnetwork that is defined by a stoichiometric 
  generator $E$ the following properties hold:
  \begin{enumerate}[{(}a{)}]
  \item Graph of the network in normal form is a \emph{forest of
      trees}
  \item Terminal strongly connected components consist of a single
    node (complex)
  \item The deficiency of the network is one
  \item The deficiency of every connected component is zero
  \item If every connected component contains only one
    \textbf{terminal} strongly connected component, then the network
    is regular (in the sense of CRNT, cf.\cite{fein-005,fein-006}, for
    example)
  \item \label{item:unique_pos_ker}$\ker\brac{N} = \rbrac{E}$ 
  \end{enumerate}
\end{lem}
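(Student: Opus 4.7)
The plan is to exploit the single property that makes $E$ a stoichiometric \emph{generator}: $E$ spans an extreme ray of the pointed polyhedral cone \cone, so $\supp\brac{E}$ is minimal in the sense that the only cone elements whose support is contained in $\supp\brac{E}$ are positive multiples of $E$ itself. Throughout, write $N'$, $I_a'$, and $E' := E|_{\supp\brac{E}} > 0$ for the restrictions to the subnetwork (the reactions indexed by $\supp\brac{E}$ together with the complexes they touch), and let $r'$, $m$, $\ell$ denote the reaction, complex and linkage-class counts in the subnetwork.

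I would prove~(\ref{item:unique_pos_ker}) first. Suppose $v \in \ker\brac{N'}$ were not proportional to $E'$. Since $E'$ is coordinatewise positive, both $E' + \epsilon v$ and $E' - \epsilon v$ lie in $\Rnn^{r'} \cap \ker\brac{N'}$ for all sufficiently small $\epsilon > 0$. Zero-padding to length $r$ gives two non-proportional elements of \cone{} summing to $2E$, contradicting extremality. Hence $\ker\brac{N'}$ is one-dimensional, proving~(f); in particular $\rank\brac{N'} = r' - 1$.

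Properties (a) and (b) then follow by essentially the same trick. An undirected cycle in the subnetwork's graph carries a signed indicator $w \in \left\{-1,0,1\right\}^{r'}$ with $I_a'\, w = 0$ and hence $w \in \ker\brac{N'}$; unless $w$ were proportional to $E'$ (impossible since $E' > 0$), (f) is violated, so the underlying graph must be a forest. Similarly, any terminal strongly connected component containing more than one complex would support a directed cycle whose $\left\{0,1\right\}$-indicator $w$ satisfies $I_a'\, w = 0$, again yielding an element of \cone{} with support strictly inside $\supp\brac{E}$ and contradicting minimality, so (b) holds. For (d), once (a) is known each connected component $C_i$ is a tree with $m_i$ nodes and $m_i-1$ edges; the analogous extremality argument restricted to $C_i$ forces $\rank\brac{N_i} = m_i - 1$ and hence $\delta_i = 0$. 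Summing over components yields $\sum_i \brac{m_i-1} = m - \ell$, and combining with $\rank\brac{N'} = r'-1 = m - \ell - 1$ from (f) gives overall deficiency $m-\ell-\rank\brac{N'}=1$, i.e.\ (c). Finally (e) is immediate from Feinberg's definition of regularity once (a)--(d) are in hand.

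I expect the main obstacle to be the clean handling of reversible reaction pairs in (a) and (b): a directed two-cycle $C \rightleftarrows C'$ is undirected-acyclic yet still carries a nonnegative indicator in $\ker\brac{I_a'}$, so one must argue that such a pair can occur inside $\supp\brac{E}$ only when $\supp\brac{E}$ consists exactly of that pair (in which case the subnetwork degenerates to a single reversible edge). Pinning down this corner case and reconciling it with (b) is the delicate combinatorial step; everything else reduces to the Euler-type identity $r' = m - \ell$ for forests and routine deficiency arithmetic.
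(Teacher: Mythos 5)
Your proof of part~(\ref{item:unique_pos_ker}) via extremality is correct, and your Euler-count derivation of (c) from (a) and (f) is fine; but there is a genuine gap exactly where you flag a ``corner case'', and it is not a corner case: you never use the property that makes the generator \emph{stoichiometric}. Extremality is what makes $E$ a generator of \cone{}; in the terminology the paper inherits from \cite{fein-025} (cf.\ also \cite{fein-017}), the generators split into \emph{cyclic} ones, satisfying $I_a\, E = 0$, and \emph{stoichiometric} ones, satisfying $I_a\, E \neq 0$, and the lemma is false for cyclic generators: for a reversible pair $A \to B$, $B\to A$, or for any directed cycle $A \to B \to C \to A$, the vector $E=\uu{1}$ spans an extreme ray of \cone, yet the subnetwork it defines is not a forest, its terminal strongly connected component is the whole cycle, and its deficiency is $0$, not $1$. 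Consequently your cycle arguments for (a) and (b) cannot close: when the undirected cycle you extract happens to be a directed cycle, its indicator $w\geq 0$ has neither mixed signs nor support ``strictly inside'' $\supp\brac{E}$ --- it can have support equal to $\supp\brac{E}$, in which case (f) gives $w \propto E'$ and there is no contradiction at all; one has merely discovered that $E$ is cyclic. This failure occurs for directed cycles of every length, not just reversible pairs, and ruling it out is the entire content of the hypothesis. The repair is short and shows precisely where the hypothesis enters: since $N' = Y'\, I_a'$, part (f) gives $\ker\brac{I_a'} \subseteq \ker\brac{N'} = \rbrac{E'}$, so $\ker\brac{I_a'}\neq 0$ would force $I_a'\, E' = 0$, i.e.\ $E$ cyclic; hence for a stoichiometric generator $\ker\brac{I_a'}=0$, the underlying multigraph has no cycles whatsoever (which also disposes of reversible pairs, as these are $2$-cycles), and (a), (b), (c) then follow along the lines you describe.

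There is a second gap, in (d). Your padding argument yields $\ker\brac{N_i}=0$ only when the subnetwork has at least two connected components; if it has exactly one, then $N_1 = N'$ and $\ker\brac{N_1} = \rbrac{E'} \neq 0$, so that component has deficiency one --- indeed, with a single linkage class the component deficiency \emph{equals} the network deficiency, so (c) and (d) cannot both hold for the graph you are working with. This case is real: the paper's own illustration, Fig.~\ref{fig:exa_net} with $E=\uu{1}$, is a connected subnetwork defined by a stoichiometric generator whose single linkage class has deficiency one. The resolution lies in the qualifier ``in normal form'' in part (a): the lemma speaks about a normal-form graph constructed in \cite{fein-025}, not about the naive subnetwork graph, and your argument does not engage with that construction at all. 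For comparison, the paper does not really prove the lemma either --- it asserts that (a), (b) follow from the definition of the generators and cites \cite{fein-025} for (c)--(f) --- so a repaired version of your argument would be more self-contained than the paper's; but as written, you have proved (f), you can prove (a)--(c) only after inserting the stoichiometric/cyclic dichotomy above, and (d) remains unproved in the connected case.
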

\begin{proof}\hfill
  \begin{itemize}
  \item[(a),(b)] Follow from the definition of the generators of \cone
  \item[(c)-(f)] A proof can be found in \cite{fein-025}
  \end{itemize}
\end{proof}
The following corollary is an immediate consequence of
Lemma~\ref{lemma:prop_subnet}, (\ref{item:unique_pos_ker}) and
Remark~\ref{coro:a_free}.
\begin{coro}
  Consider a biochemical reaction network that is defined by a
  stoichiometric generator. Then
  \begin{enumerate}[{(}i{)}]
  \item any positive vector $a$ is a steady state of
    (\ref{eq:def_ODEs}), if $k$ is chosen as in (\ref{eq:cond_k_E})
  \item for an arbitrary but fixed positive $a$, $k$ as in
    (\ref{eq:cond_k_E}) is fixed up to scalar multiplication (i.e.\
    the positive $\lambda$)
  \item (positive) scalar multiplication of $k$ corresponds to a
    \emph{time scaling} of the ODEs, thus one can -- w.l.o.g. --
    choose $\lambda=1$
  \end{enumerate}
\end{coro}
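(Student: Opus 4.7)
The plan is to dispatch the three claims (i)--(iii) in order, relying on Lemma~\ref{lemma:prop_subnet}(\ref{item:unique_pos_ker}) (which says $\ker\brac{N}$ is one-dimensional and spanned by the single generator $E$) together with the explicit parametrization of positive steady states from Lemma~\ref{lemma:existence-pos-ss}.

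For (i), I would appeal directly to Remark~\ref{coro:a_free}. Its hypothesis is that the generator matrix $E$ has no zero rows, and this is what needs checking first: since the subnetwork is by construction cut out by a generator of \cone, the vector $E$ lies in the positive cone and every reaction carries strictly positive flux, so all components of $E$ are nonzero. With this in place, Remark~\ref{coro:a_free} immediately yields that any positive vector $a$ becomes a steady state when $k$ is chosen according to (\ref{eq:cond_k_E}). For (ii), I would observe that the assumption of a single stoichiometric generator forces $p=1$, so $E$ is a vector rather than a matrix and $\LAM$ reduces to $\Rp$. Formula~(\ref{eq:cond_k_E}) then reads $k = \lambda\, \diag\brac{\Phi\brac{a^{-1}}}\, E$ with $\lambda\in\Rp$ a positive scalar, which is exactly the claim that for fixed $a$ the vector $k$ is determined up to positive scalar multiplication.

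For (iii), I would substitute $k' = \lambda\, k$ into Definition~\ref{def:v_PHI_PSI} and use the fact that $\vel{k}{x} = \diag\brac{k}\, \Phi\brac{x}$ is linear in $k$, so $\vel{\lambda\, k}{x} = \lambda\, \vel{k}{x}$. Then (\ref{eq:def_ODEs}) becomes $\dot x = \lambda\, N\, \vel{k}{x}$, which is the original ODE read in the rescaled time variable $\tau = \lambda\, t$. Because the resulting trajectories in state space coincide (only the speed of traversal changes), one may set $\lambda=1$ without loss of generality, which proves (iii).

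I do not anticipate any real obstacle: the only point that deserves explicit attention is verifying that $E$ has no zero rows so that Remark~\ref{coro:a_free} is applicable, and this follows from the defining property of a stoichiometric generator. The remainder is a direct specialization of the general framework of Section~\ref{sec:remarks-pos-ss} to the case $p=1$.
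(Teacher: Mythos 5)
Your proposal is correct and follows essentially the same route as the paper, which presents the corollary as an immediate consequence of Lemma~\ref{lemma:prop_subnet}(\ref{item:unique_pos_ker}) and Remark~\ref{coro:a_free}; you simply spell out the details the paper leaves implicit, namely that $E$ restricted to the subnetwork has no zero entries (so Remark~\ref{coro:a_free} applies), that $p=1$ reduces $\LAM$ to $\Rp$ (giving uniqueness up to the scalar $\lambda$), and the standard time-rescaling argument for (iii).
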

From here on we assume that the system has at least one positive
steady state, that is
\begin{ass}
  The vector of rate constants is given by 
  \begin{equation}\label{eq:k_subnet}
    k= \diag\brac{\Phi\brac{a^{-1}}}\, E
  \end{equation}
  for some $a\in\Rp^n$.
\end{ass}
Consider Lemma~\ref{lemma:existence-pos-ss} and especially the facts
that for networks defined by stoichiometric generators $E$ consists of
one (column) vector and that -- w.l.o.g. -- $\lambda=1$. Then the ODEs
(\ref{eq:def_ODEs}) are equivalent to
\begin{gather*}
  \dot x = N\, \vel{k}{x} = N\, \diag\brac{E}\, \diag\brac{\Phi\brac{a^{-1}}}\, 
    \Phi\brac{x} = N\, \diag\brac{E}\,
    \Phi\brac{\frac{x}{a}},
\end{gather*}
where $\Phi\brac{x} = \Pi\, \Psi\brac{x}$ (cf.\
Definition~\ref{def:v_PHI_PSI}). Thus 
\begin{equation}
  \label{eq:ODEs_subnet_pos_ss}
  \dot x = N\, \diag\brac{E}\, \Pi\, \Psi\brac{\frac{x}{a}} = N\,
  \diag\brac{E}\, \Pi\, \diag\brac{\Psi\brac{a^{-1}}}\, \Psi\brac{x}
\end{equation}
follows.
\begin{remark}
  Systems like (\ref{eq:ODEs_subnet_pos_ss}) are sometimes called
  \emph{generalized mass action systems}. For those systems reaction
  rates $v_i\brac{k,x}$ are still defined as monomials $k_i\,
  x^{y_{\reac\brac{{i}}}}$, however the exponent vector
  $y_{\reac\brac{{i}}}$ does not need to correspond to the reactant
  stoichiometry anymore.

  Further observe that for the special system defined in
  (\ref{eq:ODEs_subnet_pos_ss}) $\Psi\brac{a^{-1}}$ takes the role of
  the rate constants.
\end{remark}
To establish multistationarity we need to show the existence
of a second steady state $b\in\Rp^n$ with
\begin{displaymath}
  N\, \vel{k}{b} = 0,
\end{displaymath}
for the \textbf{same vector $k$}. That is $b$ must satisfy:
\begin{equation}
  \label{eq:condi_b}
  N\, \diag\brac{E}\, \Pi\, \diag\brac{\Psi\brac{a^{-1}}}\,
  \Psi\brac{b} = 0 
\end{equation}
Obviously $\ker\brac{N\, \diag\brac{E}\, \Pi} = \rbrac{\uu{1}_{\bar
    \rho}}$ (as N is the stoichiometric matrix of a subnetwork defined
by a stoichiometric generator; to see this recall that (i)
$\ker\brac{N}= \rbrac{E}$ and (ii) $\Pi$ has full column rank and
(iii) row vectors of $\Pi$ are unit vectors: thus $ \diag\brac{E}\,
\Pi\, \uu{1}_{\bar m} = E$). It follows that (\ref{eq:condi_b}) is
equivalent to (observe that $a$, $b>0$ implies $\Psi\brac{\frac{b}{a}}>0$):
\begin{displaymath}
  \Psi\brac{\frac{b}{a}} = \alpha\, \uu{1}_{\bar m}, \alpha>0
\end{displaymath}
Apply $\ln\brac{\cdot}$ to obtain the \emph{linear system}
\begin{subequations}
  \begin{equation}
    \label{eq:mu_eq}
    \hat{Y}^T\,  \mu = \ln\brac{\alpha}\, \uu{1}_{\bar m},
  \end{equation}
  where   
  \begin{equation}
    \label{eq:def_mu}
    \mu := \ln\frac{b}{a} = \brac{\ln\frac{b_1}{a_1}, \ldots,
      \ln\frac{b_n}{a_n}}^T.
  \end{equation}
\end{subequations}
The previous discussion motivates the following Lemma:
\begin{subequations}
  \begin{lem}[Parameterizing positive steady state solutions]
    \label{lemma:para_pos_ss}
    Consider the ODEs derived from a biochemical reaction network that is
    defined by a stoichiometric generator. Let
    \begin{equation}
      \label{eq:def_M}
      \mathcal{M} := \left\{ \mu\in\R^n\; |\; \exists \rho > 0\text{, such that
          $\hat{Y}^T\, \mu = \rho\, \uu{1}_{\bar m}$} \right\}.
    \end{equation}
    If $\mathcal{M} \neq \emptyset$, then $\mu\in\mathcal{M}$ and
    $a\in\Rp^{n}$ parameterize positive solutions of the polynomial
    equation $N\, \vel{k}{b} = 0$. Let $\mu\in\mathcal{M}$ and let
    \begin{align}
      \label{eq:k_mu_eq}
      k &= \lambda  \Phi\brac{a^{-1}},\, \lambda>0 \\
      \intertext{be the vector of rate constants. Further let}
      \label{eq:b_mu_eq}
      b &= \diag\brac{e^\mu}\, a.
    \end{align}
    Then $N\, \vel{k}{a} =0$ and $N\, \vel{k}{b} =0$ hold.
  \end{lem}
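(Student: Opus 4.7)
The plan is to verify the two steady-state identities separately. Conveniently, most of the algebraic work has already been carried out in the discussion preceding the Lemma (leading to equations~(\ref{eq:condi_b})--(\ref{eq:def_mu})); the Lemma essentially repackages that computation as a parametrization. My proof would therefore track those same manipulations but applied directly to $v(k,a)$ and $v(k,b)$, so that no appeal to kernels of composed matrices is required.

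First I would verify $N\,\vel{k}{a} = 0$. With $k = \lambda\,\diag(\Phi(a^{-1}))\,E$ (the natural reading of \eqref{eq:k_mu_eq} in light of \eqref{eq:cond_k_E} and the fact that for subnetworks defined by a stoichiometric generator $E$ is a single column), a componentwise calculation gives
\begin{equation*}
  v_i(k,a) \;=\; k_i\,a^{y_{\reac(i)}} \;=\; \lambda\,E_i\,a^{-y_{\reac(i)}}\,a^{y_{\reac(i)}} \;=\; \lambda\,E_i,
\end{equation*}
so $\vel{k}{a} = \lambda E$. Since $E\in\ker(N)$ by Lemma~\ref{lemma:prop_subnet}(\ref{item:unique_pos_ker}), this yields $N\,\vel{k}{a}=0$ immediately. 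This step is essentially the content of Lemma~\ref{lemma:existence-pos-ss} restricted to the generator case.

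The main content is verifying $N\,\vel{k}{b}=0$, and this is where $\mathcal{M}$ enters. Using $b = \diag(e^\mu)\,a$, I would compute
\begin{equation*}
  b^{y_{\reac(i)}} \;=\; \prod_{j}\bigl(a_j e^{\mu_j}\bigr)^{y_{\reac(i),j}} \;=\; a^{y_{\reac(i)}}\,e^{\mu^T y_{\reac(i)}},
\end{equation*}
so that $v_i(k,b) = \lambda\,E_i\,e^{\mu^T y_{\reac(i)}}$. The crucial observation is that, by Assumption~\ref{ass:complex_ordering}, $\reac(i)\in\{1,\ldots,\bar m\}$ and the columns of $\hat Y$ are exactly the reactant complex vectors $y_1,\ldots,y_{\bar m}$. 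Therefore the condition $\hat Y^T \mu = \rho\,\uu{1}_{\bar m}$ for $\mu\in\mathcal{M}$ forces $\mu^T y_{\reac(i)} = \rho$ for every reaction index $i$. Substituting back gives $\vel{k}{b} = \lambda e^\rho E$, which again lies in $\ker(N)$, so $N\,\vel{k}{b}=0$.

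I expect no serious obstacle; the only subtlety is bookkeeping. One must read \eqref{eq:k_mu_eq} with the correct meaning (the scalar $\lambda$ multiplying $\diag(\Phi(a^{-1}))\,E$, since here $E$ is a single column), and one must carefully invoke Assumption~\ref{ass:complex_ordering} to identify the index $\reac(i)$ as lying in the range indexing columns of $\hat Y$. Once these notational points are made explicit, the verification collapses to the two short componentwise calculations above.
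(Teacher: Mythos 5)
Your proof is correct and follows essentially the same route as the paper: the paper's proof also reduces $N\,\vel{k}{a}=0$ to Lemma~\ref{lemma:existence-pos-ss} and then shows $\vel{k}{b}$ is the scalar multiple $\lambda e^{\rho} E$ of the generator via $\Phi\brac{\diag\brac{e^\mu}a} = \diag\brac{\Phi\brac{a}}\,\Phi\brac{e^\mu}$ and $\Psi\brac{e^\mu}=e^{\hat Y^T\mu}=e^{\rho}\,\uu{1}_{\bar m}$, which is exactly your componentwise computation written in matrix form. Your reading of \eqref{eq:k_mu_eq} as $k=\lambda\,\diag\brac{\Phi\brac{a^{-1}}}\,E$ also matches how the paper's proof actually uses $k$ in its first line.
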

\end{subequations}
\begin{proof}
  Let $\mu\in\mathcal{M}$ let $k$ and $b$ be as in (\ref{eq:k_mu_eq}),
  (\ref{eq:b_mu_eq}), respectively. Observe that $N\, \vel{k}{a} = 0$
  follows from Lemma~\ref{lemma:existence-pos-ss}. Thus we have to
  show that $N\, \vel{k}{b} =0$ holds. To this end observe that
  \begin{align*}
    N\, \vel{k}{b} &= N\, \diag\brac{\lambda\, E}\,
    \diag\brac{\Phi\brac{a^{-1}}}\, \Phi\brac{b} \\
    &= \lambda\, N\,
    \diag\brac{E}\, \diag\brac{\Phi\brac{a^{-1}}}\,
    \Phi\brac{\diag\brac{e^\mu}\, a} \\ 
    &=  \lambda\, N\, \diag\brac{E}\, \diag\brac{\Phi\brac{a^{-1}}}\,
    \diag\brac{\Phi\brac{a}}\, \Phi\brac{e^\mu} \\
    &= \lambda\, N\, \diag\brac{E}\, \Pi\, \Psi\brac{e^\mu} \\
    &= \lambda\, N\, \diag\brac{E}\, \Pi\, e^{\hat{Y}^T\, \mu} \\ 
    &= \lambda\, N\, \diag\brac{E}\, \Pi\, e^\rho\, \underline{1}_{m_1}
    = 0.
  \end{align*}
\end{proof}
\begin{remark}\label{rem:inifite_ss}
  If either (i) $\underline{1}_{\bar m} \in \rbrac{\hat{Y}^T}$ or (ii)
  $\ker\brac{\hat{Y}^T}$ is nontrivial (i.e.\ $\rank\brac{\hat{Y}} <
  n$), or both, then a \emph{fixed vector $a\in\Rp^n$} together with
  $k$ as in (\ref{eq:k_mu_eq}) defines an \emph{infinite} set of
  positive steady states $b\in\Rp^n$. To see
  this assume that (i), (ii) or both hold. Then (\ref{eq:mu_eq}) is
  solvable and the solution set
  \begin{displaymath}
    \mathcal{M} := \left\{ \mu\in\R^n\; |\; \exists \rho > 0\text{, such that
        $\hat{Y}^T\, \mu = \rho\, \uu{1}_{\bar m}$} \right\}
  \end{displaymath}
  defines a linear subspace, that is there exists a matrix $M$ and a
  vector $\kappa$ of appropriate dimensions, such that
  \begin{displaymath}
    \mu\in\mathcal{M} \Leftrightarrow \mu = M\, \kappa
  \end{displaymath}
  holds. As every $\mu\in\mathcal{M}$ defines a positive $b$ and as a
  linear vector space contains infinitely many elements $\mu$, there
  exists, \emph{for a fixed $a\in\Rp^n$} infinitely many $b\in\Rp^n$ with
  \begin{equation}
    \label{eq:def_b_a_kap}
    b_a\brac{\kappa} = \diag\brac{e^{M\, \kappa}}\, a.
  \end{equation}
\end{remark}

\section{Subnetwork multistationarity}
\label{sec:subnet_multistat}

If the conditions of Remark~\ref{rem:inifite_ss} hold, then there
exists an infinite set of positive steady states, even for a given
$a\in\Rp^n$ (recall that $\Psi\brac{a^{-1}}$ takes the role of the rate
constants). Fix $a\in\Rp$ and thus $k =
\diag\brac{\Phi\brac{a^{-1}}}\, E = \diag\brac{E}\, \Pi\,
\Psi\brac{a^{-1}}$. Then all $b_a\brac{\kappa}$ as defined in
(\ref{eq:def_b_a_kap}) are steady states. However, for a given initial
condition $x_0\in\Rp^n$ the system \lq sees\rq{} only a subset of set
of positive steady states.

To see this recall the ODEs derived from a biochemical reaction
network $\dot x = N\, \vel{k}{x}$ and let $s :=\rank\brac{N}
< n$. Let $S$, $W$ be \emph{orthonormal bases} of $\rbrac{N} =:
\mathcal{S}$ and $\mathcal{S}^\perp$, respectively. Similar to
\cite{fein-032}, introduce the transformation 
\begin{displaymath}
  y=S^T\, x,\; z= W^T\, x\; \text{and $x=\xi\brac{x,y} = S\, y + W\,
    z$}. 
\end{displaymath}
In the new coordinates the ODEs read as
\begin{align*}
  \dot y &= S^T\, \dot x = S^T\, N\, \diag\brac{E}\, \Pi\,
  \diag\brac{\Psi\brac{a^{-1}}}\,
  \Psi\brac{\xi\brac{y,z}} \\ 
  \dot z &= W^T\, \dot x = 0
\end{align*}
showing the invariance of $\mathcal{S}$. Let
$x\brac{0}=x_0\in\Rp^n$, then the solution $x\brac{t}$ is given by 
\begin{displaymath}
  x\brac{t} = x_0 + \int_0^t\,  N\, \diag\brac{E}\, \Pi\,
  \diag\brac{\Psi\brac{a^{-1}}}\, \Psi\brac{x\brac{\tau}} d\tau.
\end{displaymath}
For the new coordinates one obtains
\begin{align*}
  y\brac{t} &= S^T\, x_0 + S^T\, \int_0^t\,  N\, \diag\brac{E}\, \Pi\,
  \diag\brac{\Psi\brac{a^{-1}}}\, \Psi\brac{x\brac{\tau}} d\tau \\
  z\brac{t} &= \tilde{W}^T\, x_0 = const.
\end{align*}
that is solutions are confined to parallel translates of
$\mathcal{S}$. Thus, for a given initial condition, the system
\lq sees\rq{} only those positive steady states that are in the
intersection of $b_a\brac{\kappa}$ and $\mathcal{S}$. Observe that
$\tilde{\mathcal{S}} := \rbrac{N\, \diag\brac{E}\, \Pi} \subseteq
\mathcal{S}$ and that, by a similar argument, solutions are confined
to parallel translates of $\tilde{\mathcal{S}}$. This motivates the
following definition of multistationarity with respect to a linear
subspace as introduced in \cite{cc-flo-003}:
\begin{defi}
  \label{defi-multi} 
  Given a subspace ${\cal V}\subset \R^n$,
  the system $\dot x =  N\, v(k,x)$ from (\ref{eq:def_ODEs}) with
  stoichiometric subspace ${\cal S}=\im{(N)}$ is said to exhibit
  ${\cal V}$-multistationarity if and only if there exist 
  a positive vector $k \in
  \Rp^r$ and
  at least two distinct
  positive vectors $a$, $b \in \Rp^n$ with
  \begin{subequations}
    \begin{align}
      \label{eq:multi_ode_0}
      N\, v(k,a) \ = 0, \quad N\, v(k,b) \ = 0 , \\
      \label{eq:multi_ode_1}
      b-a \in {\cal V}.
    \end{align}
  \end{subequations}
\end{defi}
\begin{remark}
  Note that if $\mathcal{V} = \mathcal{S}$, then
  Definition~\ref{defi-multi} is equivalent to the familiar definition
  of multistationarity in Chemical Engineering and
  especially in CRNT as defined, for example, in
  \cite{fein-006,fein-017}.
\end{remark}
\begin{remark}
  Note that for subnetworks defined by stoichiometric generators two
  linear subspaces are of particular interest: $\rbrac{N\,
    \diag\brac{E}\, \Pi}$ and the image of stoichiometric matrix of
  the overall network $\rbrac{\hat N}$. Multistationarity with respect
  to $\rbrac{N\, \diag\brac{E}\, \Pi}$ means that the subnetwork can
  exhibit multistationarity, if it is considered in isolation, while
  multistationarity with respect to $\rbrac{\hat N}$ means that the
  subnetwork as part of the larger network can exhibit
  multistationarity. Thus, if a subnetwork exhibits $\rbrac{\hat
    N}$-multistationarity, but not $\rbrac{N\, \diag\brac{E}\, 
    \Pi}$-multistationarity, then this subnetwork can give rise to
  multistationarity for the overall network, even though, in
  isolation, it does not exhibit multistationarity.
\end{remark}
\begin{remark}
  As an illustration consider the network in
  Fig.\ref{fig:exa_net}. For this network the matrix $N$ is given by
  \begin{subequations}
    \begin{align}
      \label{eq:N_exa}
      N &= \rbrac{
        \begin{array}{rrrr}
          1 & 0 & -1 & 0\\
          0 & 1 & -1 & 0\\
          0 & 0 & 1 & -1
        \end{array}
      } \\
      \intertext{the unique generator of $\cone$ is given by}
      \label{eq:E_exa}
      E &= \left(\, 1,\, 1,\, 1,\, 1\, \right)^T.
      \intertext{For $\Phi\brac{x}$ one obtains}
      \Phi\brac{x} &= \left( 1,\, 1,\, x_3,\, x_1\, x_2 \right)^T \\
      \intertext{and therefore}
      \Pi &=\left[
        \begin{array}{ccc}
          1 & 0 & 0 \\
          1 & 0 & 0 \\
          0 & 1 & 0 \\
          0 & 0 & 1
        \end{array}
      \right] & \Psi\brac{x} &= \left( 1,\, x_3,\, x_1\, x_2 \right)^T.
      \intertext{Thus one obtains}
      \label{eq:k_exa}
      k &= \left(\, 1,\, 1,\, \frac{1}{a_1\, a_2},\, \frac{1}{a_3}\,
      \right)^T 
      \intertext{for arbitrary $a>0$. Observe that for this example
        $N\, \diag\brac{E}\, \Pi = N\, \Pi$. It is straightforward to
        verify  that all points on the following one-dimensional curve
        are steady states (parameterized by $p>0$):}
      \label{eq:ss_exa}
      x_s\brac{p} &= \left(\, p,\, \frac{1}{p},\, 1\, \right).
    \end{align}
  \end{subequations}
  As $N$ has full row rank, the left kernel is trivial, that is, it is
  spanned by $W=0$, the three-dimensional zero vector. Pick two
  distinct real numbers $p_1>$, $p_2>0$. Then $x_s\brac{p_1}$,
  $x_s\brac{p_2}$ are steady states that satisfy $W^T\,
  \brac{x_s\brac{p_1}-  x_s\brac{p_2}}=0$ (i.e.\ . $x_s\brac{p_1}-
  x_s\brac{p_2} \in \rbrac{N}$. Thus, according to our definition, the
  system exhibits $\rbrac{N}$-multistationarity.

  It fails to exhibit $\rbrac{N\, \Pi}$-multistationarity, as for a
  given initial condition $x_0>0$, all trajectories converge to a
  unique steady state. That is due to the fact that
  \begin{displaymath}
    N\, \Pi = \rbrac{
     \begin{array}{rrr}
        1 & -1 & 0\\
        1 & -1 & 0\\
        0 & 1 & -1
      \end{array}
    }
  \end{displaymath}
  has a nontrivial left kernel $\tilde W=\left(\, 1,\, -1,\, 0\,
  \right)^T$. Thus, for a given initial condition, a trajectory is
  confined to an affine linear subspace perpendicular to $\tilde
  W$. And all trajectories starting in particular affine linear
  subspace converge to the same steady state (demonstrated numerically
  for $a=\uu{1}$ in Fig.\ref{fig:3d_simu} and
  \ref{fig:x1_x2_plane}). Note that, using $k$ as in (\ref{eq:k_exa})
  the ODEs are equivalent to a system of ODEs derived from the network
  displayed in Fig.\ref{fig:transformed_exa_net}, a weakly reversible
  deficiency zero network. From the Deficiency Zero Theorem
  \cite{fein-005} follows that this network has a unique,
  asymptotically stable positive steady state -- relative to a given
  initial condition.

  Further note that the network in Fig.~\ref{fig:exa_net} is irregular
  in the sense of CRNT (cf.\ \cite{fein-006}) and that it fails to
  exhibit $\rbrac{N\, \Pi}$-multistationarity, while it exhibits
  $\rbrac{N}$-multistationarity.
\end{remark}
\begin{figure}
  \subfigure[Example network]{\label{fig:exa_net}
    \begin{minipage}[c]{0.5\linewidth}
      \begin{displaymath}
        \xymatrix{
          & & & {\bf A} \\
          {\bf A + B} \ar [r] ^{\bf k_4} & {\bf C} \ar [r] ^{\bf k_3}
          & 0 \ar [ur] ^{\bf k_1} \ar [dr] ^{\bf k_2}& \\
          & & & {\bf B}
        }
      \end{displaymath}
    \end{minipage}
    }
    \subfigure[Transformed network]{\label{fig:transformed_exa_net}
      \begin{minipage}[c]{0.5\linewidth}
        \begin{displaymath}
          \xymatrix{
            0 \ar [r] ^{\bf k^*} & {\bf A + B} \ar [r] ^{\bf k_4} & {\bf
              C} \ar `d[ll] `[ll] ^{\bf k_3}  [ll]  \\
          }
      \end{displaymath}
    \end{minipage}
  }\\
  \subfigure[Simulation for selected initial conditions]{\label{fig:3d_simu}
    \includegraphics[width=6cm]{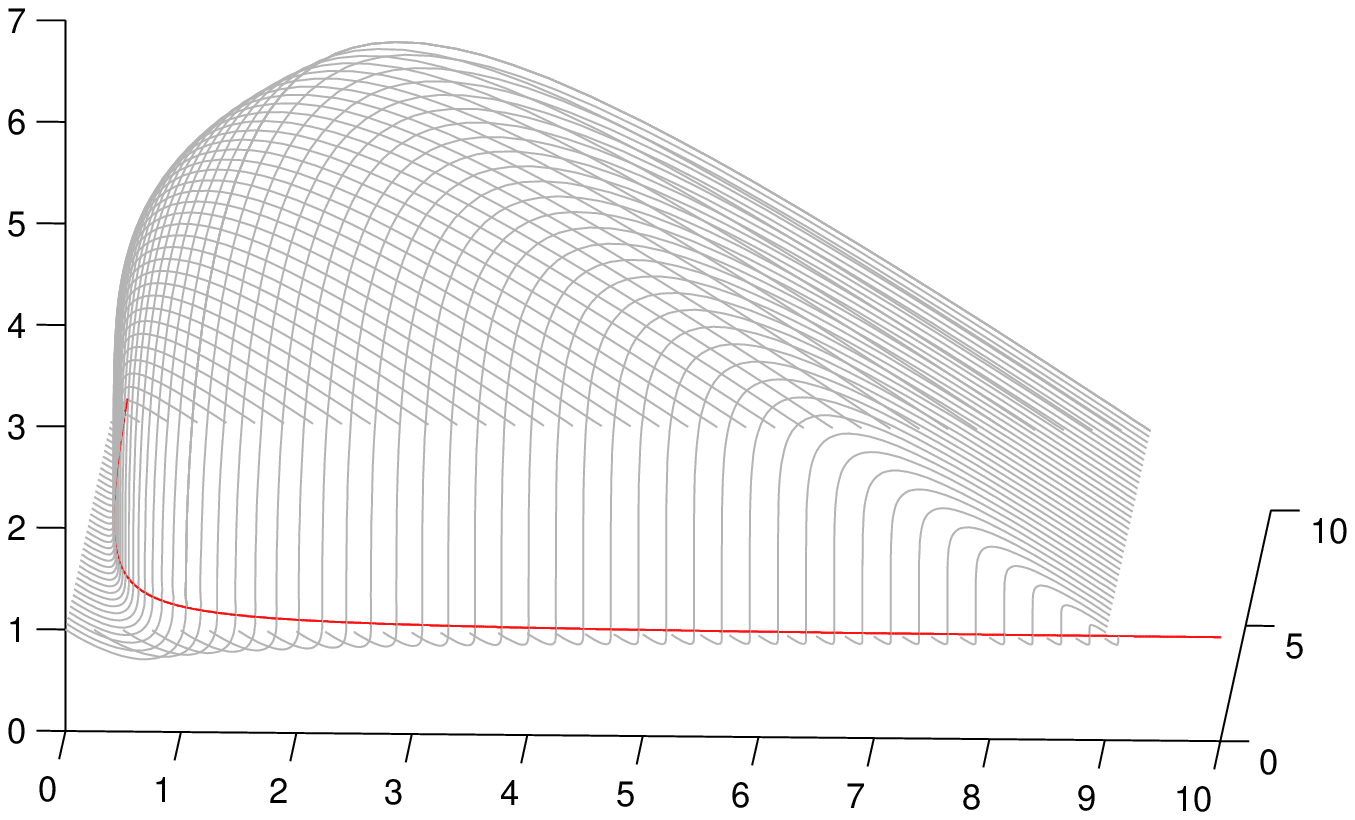}} 
  \subfigure[Projection in the $x_1$-$x_2$-plane]{\label{fig:x1_x2_plane}
    \includegraphics[width=6cm]{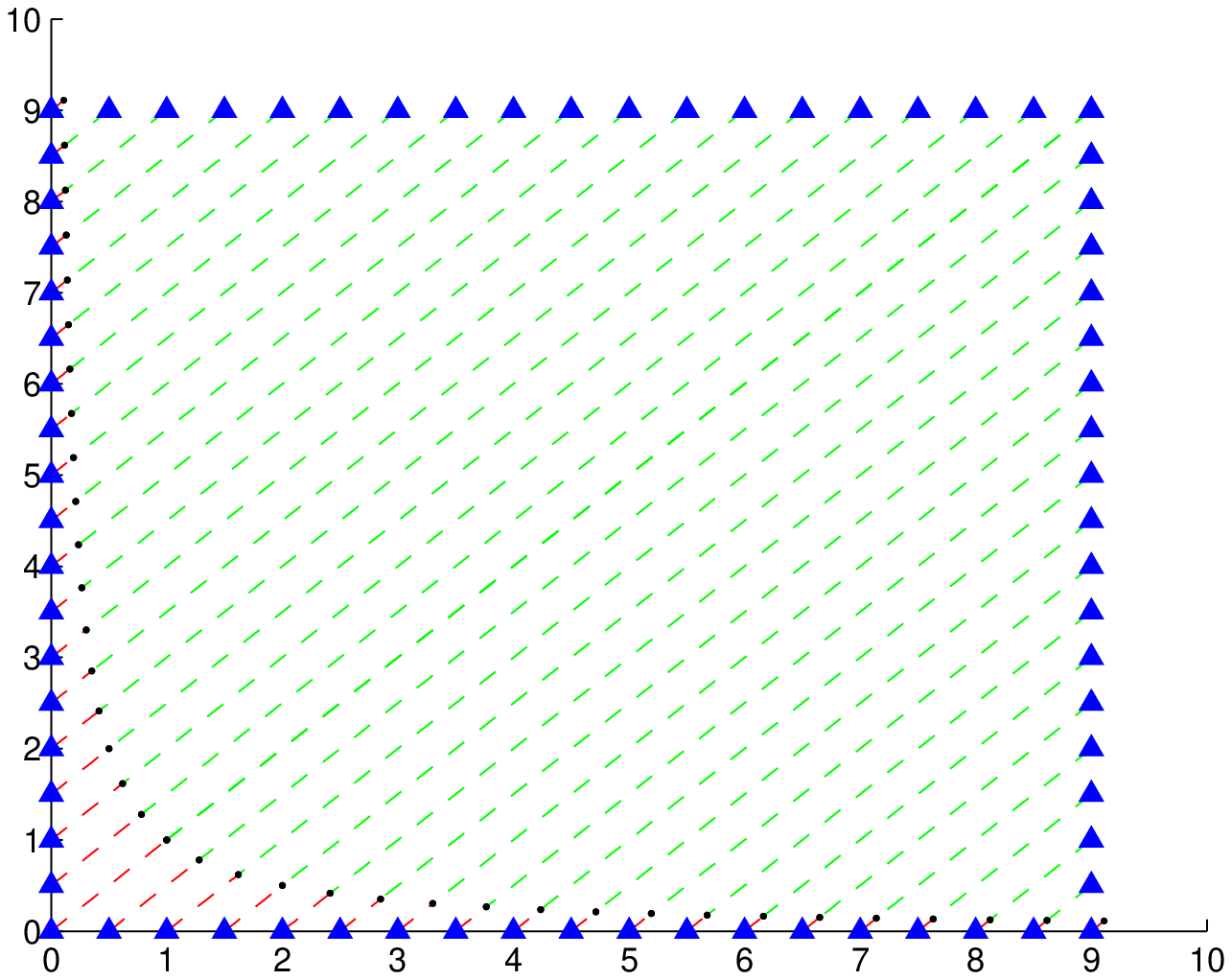}} 
\end{figure}

\section{Establishing multistationarity for subnetworks}

Consider a biochemical reaction network defined by a stoichiometric
generator. From Section~\ref{sec:SubNets} it is known, that for a
given $a\in\Rp^n$ all points $b_a\brac{\kappa}$ as defined in
(\ref{eq:def_b_a_kap}) are steady states. Moreover, the set
$\mathcal{M}$, a linear subspace, as defined in (\ref{eq:def_M})
contains all $\mu=\ln b_a\brac{\kappa} - \ln a$. From
Section~\ref{sec:subnet_multistat} it is known, that
$\mathcal{V}$-multistationarity requires

\begin{align*}
  b_a\brac{\kappa} - a &\in \mathcal{V} \\
  \ln  b_a\brac{\kappa} - \ln a &\in \mathcal{M}
\end{align*}
To this end a result from \cite{fein-024} can be used. To state the
result, let \sig{u} denote the sign pattern of the vector $u\in\R^n$. Then
$v=\sig{u}$ is a vector with entries $v_i\in\{+,-,0\}$ depending on
whether $u_i>0$, $u_i<0$ or $u_i=0$, respectively.
\begin{lem}[cf. \cite{fein-024}]
  \label{lemma:sigma_mu}
  Let $M_1\subseteq \R^n$ and $M_2\subseteq \R^n$ be two nontrivial
  subsets of $\R^n$ and define $M_3:=\big\{
  \brac{m_1,m_2} \in M_1\times M_2 \big|$ $\sig{m_1} = \sig{m_2}
  \big\}$ as the set of all ordered pairs $(m_1,m_2)$ of elements
  $m_1\in M_1$ and $m_2\in M_2$ with the same sign pattern. Two
  positive vectors $p$ and $q$ with    $\ln q - \ln p \in M_1$ and 
  $q-p \in M_2$ exist, if and only if $M_3 \neq \emptyset$. Then $p$
  and $q$ are given by
  \begin{align}
    \label{eq:p_m2_exp_m1}
    \left(p_i\right)_{i=1,\, \ldots,\, n} &= 
    \begin{cases}
      \frac{m_{2i}}{e^{m_{1i}}-1}\text{, if $m_{1i} \neq 0$} \\
      \bar{p}_i>0\text{, if $m_{1i} = 0$,}
    \end{cases}
    \intertext{where $\bar{p}_i$ denotes an arbitrary positive number
      and}
    \label{eq:q_exp_m1_p}
    \left(q_i\right)_{i=1,\, \ldots,\, n} &= e^{m_{1i}}\, p_i.
  \end{align}
\end{lem}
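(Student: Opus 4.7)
The statement is a biconditional together with an explicit parametrization of $p$ and $q$, so the plan is to prove each direction separately and to derive the formulas \eqref{eq:p_m2_exp_m1}--\eqref{eq:q_exp_m1_p} along the way. The forward direction is immediate: given positive $p,q\in\Rp^n$ with $m_1:=\ln q-\ln p\in M_1$ and $m_2:=q-p\in M_2$, strict monotonicity of the logarithm on $\Rp$ gives, coordinate by coordinate, $\sign\brac{m_{1i}} = \sign\brac{\ln q_i-\ln p_i} = \sign\brac{q_i-p_i} = \sign\brac{m_{2i}}$. Hence $\sig{m_1}=\sig{m_2}$ and $(m_1,m_2)\in M_3$, so $M_3\neq\emptyset$.

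For the converse I take any $(m_1,m_2)\in M_3$ and \emph{define} $p$ and $q$ coordinate-wise by \eqref{eq:p_m2_exp_m1} and \eqref{eq:q_exp_m1_p}. Three things must be checked: positivity of $p$ (and hence of $q$); the identity $\ln q-\ln p=m_1$; and the identity $q-p=m_2$. Positivity is where the sign-pattern hypothesis is essential. On indices with $m_{1i}=0$, $p_i=\bar p_i$ is chosen positive by fiat. On indices with $m_{1i}\neq 0$, the hypothesis $\sig{m_1}=\sig{m_2}$ together with the elementary fact $\sign\brac{e^t-1}=\sign\brac{t}$ guarantees that numerator and denominator in the defining ratio share the same sign, so $p_i>0$. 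Then $q_i=e^{m_{1i}}p_i>0$ follows automatically.

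The two remaining identities are now routine. By construction $q_i/p_i=e^{m_{1i}}$, so $\ln q_i-\ln p_i=m_{1i}$ and therefore $\ln q-\ln p=m_1\in M_1$. For the difference, $q_i-p_i=\brac{e^{m_{1i}}-1}p_i$; when $m_{1i}\neq 0$ substituting the formula for $p_i$ gives $q_i-p_i=m_{2i}$, and when $m_{1i}=0$ both sides vanish (the left because $e^0-1=0$, the right because the sign-pattern condition forces $m_{2i}=0$). Hence $q-p=m_2\in M_2$. I do not foresee a substantive obstacle: the whole argument rests on the two elementary identities $\sign\brac{\ln q_i-\ln p_i}=\sign\brac{q_i-p_i}$ and $\sign\brac{e^t-1}=\sign\brac{t}$. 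The only step requiring care is the $m_{1i}=0$ coordinates of the backward direction, where one must simultaneously use the free choice of $\bar p_i$ and the forced vanishing of $m_{2i}$; otherwise the formula would fail to reproduce $m_{2i}$ on the nose.
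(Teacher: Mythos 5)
Your proof is correct and complete. Note that the paper itself gives no proof of this lemma --- it is stated with a citation to \cite{fein-024} and the argument is deferred to that reference --- so there is nothing in the paper to compare against; your two-direction argument (monotonicity of $\ln$ for the forward implication, and the explicit coordinate-wise construction with the sign identity $\sign\brac{e^t-1}=\sign\brac{t}$ for the converse, including the careful treatment of the $m_{1i}=0$ coordinates where $m_{2i}$ is forced to vanish) is exactly the standard proof one would expect, and it supplies the missing details correctly.
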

Thus -- using $b_a\brac{\kappa}$ instead of $q$, $a$ instead of $p$ and
$\mathcal{M}$ instead of $M_1$, $\mathcal{V}$ instead of $M_2$ -- all
one has to do is to find a vector $\mu\in\mathcal{M}$ and a vector
$s\in\mathcal{V}$ with $\sig{\mu} = \sig{s}$ to establish
$\mathcal{V}$-multistationarity.


\end{document}